\documentclass[aps,pra,twocolumn,nofootinbib,superscriptaddress]{revtex4-2}

\usepackage{amsmath,amssymb,amsthm,mathtools,bm,mathrsfs}
\usepackage{graphicx}
\usepackage{enumitem}
\usepackage[colorlinks=true,citecolor=blue,linkcolor=blue,urlcolor=blue]{hyperref}

\allowdisplaybreaks

\theoremstyle{plain}
\newtheorem{theorem}{Theorem}
\newtheorem{proposition}{Proposition}
\newtheorem{corollary}{Corollary}

\theoremstyle{definition}
\newtheorem{definition}{Definition}
\newtheorem{remark}{Remark}

\newcommand{\Tr}{\operatorname{Tr}}
\newcommand{\diag}{\operatorname{diag}}

\newcommand{\C}{\mathbb C}
\newcommand{\D}{\mathcal D}
\newcommand{\W}{\mathcal W}
\newcommand{\Id}{\mathbb I}
\newcommand{\Gap}{\Gamma}
\newcommand{\BG}{\mathcal G_{\mathrm B}}

\begin{document}

\title{A low order Bargmann invariant hierarchy for set coherence}

\author{Yan-Ling Wang}
\email{wangylmath@yahoo.com}
\affiliation{School of Computer Science and Technology, Dongguan University of Technology, Dongguan 523808, China}

\begin{abstract}
Set coherence is a basis-independent relational form of quantum coherence: a
finite family of quantum states is set incoherent exactly when all its members
are diagonal in one common basis. We determine how much low-order Bargmann data
are needed to decide this property. For two states, second-order data are
complete for qubits but fail for qutrits, while complete third-order data are
sufficient for qutrits but fail already in dimension four. We then show that
fourth-order, ordering-sensitive Bargmann invariants give the first universal
pairwise criterion for set coherence. Applied to all unordered pairs, this
criterion yields a complete test for arbitrary finite families. The result
provides a low-order hierarchy connecting cyclic trace invariants with the
noncommutativity that prevents a common incoherent basis.
\end{abstract}

\maketitle

\section{Introduction}

Quantum coherence is a basic signature of quantum behavior and a central
resource in quantum information science. In the standard resource theory,
coherence is defined relative to a fixed reference basis~\cite{Baumgratz2014,Streltsov2017}. This basis may be selected by a Hamiltonian,
a measurement device, or a computational encoding. Without such a prescribed
basis, however, a single density operator has no intrinsic basis-independent
coherence: it can always be diagonalized in its own eigenbasis.

A nontrivial basis-independent notion appears when several states are considered
at the same time. Designolle \emph{et al.} introduced this relational resource
under the name of set coherence~\cite{Designolle2021}. Given a finite family
\[
\vec\rho=(\rho_1,\ldots,\rho_n)
\]
of states on \(\C^d\), each individual state can be diagonalized in some basis.
The question is whether all of them can be diagonalized in the same basis. If
such a common basis exists, the family is set incoherent; otherwise it is set
coherent. Thus set coherence does not belong to any single member of the
family. It records the incompatibility among the eigenbasis structures of the
states.

With full knowledge of the density matrices, deciding set coherence is
straightforward. A finite family of Hermitian matrices is simultaneously
diagonalizable if and only if it is pairwise commuting. The question studied
here is more restrictive and more invariant in spirit: can this relational
nonclassicality be decided from a prescribed list of basis-independent 
data, without reconstructing the states and without choosing a reference basis?

The basis-independent data considered in this work are cyclic traces of products of the
states. We refer to them as generalized Bargmann invariants. For a word
\(w=i_1i_2\cdots i_m\) in the state labels, the corresponding invariant is
\[
\Delta_w(\vec\rho)=\Tr(\rho_{i_1}\rho_{i_2}\cdots\rho_{i_m}).
\]
These quantities are invariant under simultaneous unitary conjugation and are
therefore natural for basis-independent certification. Bargmann-type invariants
and related multitime trace quantities appear in geometric phases,
Kirkwood--Dirac quasiprobabilities, weak values, coherence and overlap
inequalities, multiphoton indistinguishability, quantum imaginarity, quantum entanglement, and
relational information between states~\cite{Bargmann1964,Kirkwood1933,Dirac1945,ArvidssonShukur2024,CanImaginarityBroadcast2024,Schmid2024KD,Dressel2015,WagnerGalvao2023WeakValues,GalvaoBrod2020,WagnerBarbosaGalvao2024,Giordani2021,Giordani2023,Fernandes2024,LiTan2025, ZhangXieLi2025,Zhang25b,Oszmaniec2024,LiWagnerZhang2026}. They are also
operationally meaningful, since multivariate traces can be estimated by
multi-copy, interferometric, or randomized-measurement-inspired
protocols~\cite{Wagner2024QuantumCircuits,Quek2024,Simonov2025,Shin2025}.

A recent systematic language for such data is provided by Bargmann
scenarios~\cite{Wagner2026BargmannScenarios}. A scenario fixes a finite set of
words and studies the attainable tuple of the corresponding Bargmann
invariants. In this language, one can compare quantum-realizable tuples with
those generated by simultaneously diagonal families. Points that cannot arise
from a common diagonal basis witness a relational resource. This viewpoint
leads to the guiding question of the paper: 
\emph{How low in Bargmann order can one go while still deciding set coherence?}

At first sight, a complete answer might be expected to require many trace
polynomials. Indeed, invariant-theoretic descriptions of simultaneous unitary
orbits of matrices are usually generated by large families of trace
polynomials~\cite{Procesi1976,Razmyslov1974,Formanek1986}. Set coherence is a
more targeted property. It does not ask for the full unitary orbit of a family,
but only whether the family is simultaneously diagonalizable. This distinction
leaves open the possibility that very low-order Bargmann data may already be
complete in some dimensions.  The purpose of this paper is to determine how far such low-order data can go
and to identify the first universal pairwise order.

The paper is organized as follows. Section~\ref{sec:setup} introduces set
coherence,  and completeness of a Bargmann
scenario. Section~\ref{sec:hierarchy} develops the two-state hierarchy  from the qubit result to the four-dimensional counterexample.
Section~\ref{sec:fourth} proves the universal fourth-order criterion and then
interprets it through the Bargmann gap. Section~\ref{sec:conclusion} concludes
with possible extensions.  

\section{Preliminaries}
\label{sec:setup}

Let \(\D(\C^d)\) denote the set of density operators on \(\C^d\). We write
\(\Tr\) for the trace, \(\Id_d\) for the identity operator on \(\C^d\), and
\([A,B]=AB-BA\) for the commutator. The Hilbert--Schmidt norm is
\[
\|X\|_2=\sqrt{\Tr(X^\dagger X)}.
\]

\begin{definition}[Set coherence, \cite{Designolle2021}]
A finite family \(\vec\rho=(\rho_1,\ldots,\rho_n)\), with
\(\rho_i\in\D(\C^d)\), is \emph{set incoherent} if there exists a unitary
\(U\) such that all \(U\rho_iU^\dagger\) are diagonal. If no such unitary
exists, the family is \emph{set coherent}.
\end{definition}

Since density operators are Hermitian, the standard simultaneous
diagonalization criterion gives
\[
\vec\rho\ \text{is set incoherent}
\quad\Longleftrightarrow\quad
[\rho_i,\rho_j]=0\quad\text{for all }i<j.
\]

For a word \(w=i_1i_2\cdots i_m\) in the labels \(1,\ldots,n\), define the
generalized Bargmann invariant
\[
\Delta_w(\vec\rho)=\Tr(\rho_{i_1}\rho_{i_2}\cdots\rho_{i_m}).
\]
Cyclically equivalent words give the same invariant.

\begin{definition}[Bargmann scenario, \cite{Wagner2026BargmannScenarios}]
A Bargmann scenario is a finite set \(\W\) of words with its defining value map
\[
\Phi_\W(\vec\rho)=\bigl(\Delta_w(\vec\rho)\bigr)_{w\in\W}.
\]
\end{definition}
For fixed dimension \(d\), let
\begin{align}
B_d(\W)
&=\{\Phi_\W(\vec\rho):\vec\rho\in\D(\C^d)^n\},\\
C_d(\W)
&=\{\Phi_\W(\vec\rho):\vec\rho\in\D(\C^d)^n
\ \text{is set incoherent}\},\\
I_d(\W)
&=\{\Phi_\W(\vec\rho):\vec\rho\in\D(\C^d)^n
\ \text{is set coherent}\}.
\end{align}
We say that \(\W\) \emph{decides set coherence} in dimension \(d\), or is
\emph{complete} in dimension \(d\), if
\[
C_d(\W)\cap I_d(\W)=\varnothing.
\]
Equivalently, no set-incoherent and set-coherent families in the same dimension
have the same \(\W\)-data.

\section{The Low-Order Two-State Hierarchy}
\label{sec:hierarchy}

For two states \((\rho,\sigma)\), we use the scenarios
\begin{align*}
\W_2&=\{11,22,12\},\\
\W_3&=\{11,22,12,111,222,112,122\},\\
\W_4&=\{1122,1212\}.
\end{align*}
The hierarchy proved below is summarized in Table~\ref{tab:hierarchy}. The
entry ``complete'' means that the corresponding data separate commuting from
noncommuting pairs in the given dimension.

\begin{table}[t]
\caption{Completeness of the two-state Bargmann scenarios considered in this
work. Complete means \(C_d(\W)\cap I_d(\W)=\varnothing\).}
\label{tab:hierarchy}
\begin{ruledtabular}
\begin{tabular}{cccc}
Dimension & \(\W_2\) & \(\W_3\) & \(\W_4\)\\
\hline
\(d=2\) & complete & complete & complete\\
\(d=3\) & incomplete & complete & complete\\
\(d\ge4\) & incomplete & incomplete & complete
\end{tabular}
\end{ruledtabular}
\end{table}

We start at second order. For \(\W_2\), write
\[
\Phi_{\W_2}(\rho,\sigma)=(x,y,z),
\]
where
\[
x=\Tr\rho^2,\qquad y=\Tr\sigma^2,\qquad z=\Tr(\rho\sigma).
\]
These three numbers are the Hilbert--Schmidt lengths of the two states and
their Hilbert--Schmidt overlap. For qubits this information is already enough.

\begin{theorem}[Second-order completeness for qubits]
\label{thm:qubitW2}
For two qubit states, the two purities and the Hilbert--Schmidt overlap decide
whether the pair admits a common incoherent basis. More explicitly,
\begin{subequations}
\label{eq:qubit-regions}
\begin{align}
B_2(\W_2)
&=
\Bigl\{(x,y,z):x,y\in[1/2,1],
\nonumber\\[-1mm]
&\hspace{11mm}
(2z-1)^2\le(2x-1)(2y-1)\Bigr\},\\
C_2(\W_2)
&=
\Bigl\{(x,y,z)\in B_2(\W_2):
\nonumber\\[-1mm]
&\hspace{11mm}
(2z-1)^2=(2x-1)(2y-1)\Bigr\},\\
I_2(\W_2)
&=
\Bigl\{(x,y,z)\in B_2(\W_2):
\nonumber\\[-1mm]
&\hspace{11mm}
(2z-1)^2<(2x-1)(2y-1)\Bigr\}.
\end{align}
\end{subequations}
Hence \(C_2(\W_2)\cap I_2(\W_2)=\varnothing\).
\end{theorem}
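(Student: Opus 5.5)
We will work throughout in the Bloch representation. Write each qubit state as
\[
\rho=\tfrac12(\Id_2+\vec r\cdot\vec\sigma),\qquad
\sigma=\tfrac12(\Id_2+\vec s\cdot\vec\sigma),
\]
with \(|\vec r|,|\vec s|\le1\), where \(\vec\sigma\) is the vector of Pauli matrices. The first step is to express the three invariants in these coordinates. From \(\Tr(\sigma_a\sigma_b)=2\delta_{ab}\) we get
\[
x=\tfrac12(1+|\vec r|^2),\qquad y=\tfrac12(1+|\vec s|^2),\qquad z=\tfrac12(1+\vec r\cdot\vec s).
\]
Hence
\[
2x-1=|\vec r|^2,\qquad 2y-1=|\vec s|^2,\qquad 2z-1=\vec r\cdot\vec s.
\]
The defining inequality \((2z-1)^2\le(2x-1)(2y-1)\) is then exactly the Cauchy--Schwarz inequality \((\vec r\cdot\vec s)^2\le|\vec r|^2|\vec s|^2\).

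To describe \(B_2(\W_2)\) we need both inclusions. Any pair of states gives \(x,y\in[1/2,1]\) and satisfies Cauchy--Schwarz, so its data lie in the set. Conversely, take any \((x,y,z)\) in the set. Put \(a=\sqrt{2x-1}\) and \(b=\sqrt{2y-1}\), both in \([0,1]\). If \(ab>0\), set \(\cos\theta=(2z-1)/(ab)\in[-1,1]\); if \(ab=0\), the inequality forces \(2z-1=0\). In either case choose \(\vec r=a\hat e_1\) and \(\vec s=b(\cos\theta\,\hat e_1+\sin\theta\,\hat e_2)\), with \(\theta\) arbitrary when \(ab=0\). This pair realizes the given triple.

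Next we characterize commutation geometrically. The algebra \([\vec r\cdot\vec\sigma,\vec s\cdot\vec\sigma]=2i(\vec r\times\vec s)\cdot\vec\sigma\) gives
\[
[\rho,\sigma]=\tfrac{i}{2}(\vec r\times\vec s)\cdot\vec\sigma .
\]
So the pair commutes, and is therefore set incoherent by the simultaneous-diagonalization criterion, if and only if \(\vec r\times\vec s=0\). Lagrange's identity \(|\vec r\times\vec s|^2=|\vec r|^2|\vec s|^2-(\vec r\cdot\vec s)^2\) then rewrites this condition in terms of the data alone:
\[
[\rho,\sigma]=0\iff (2z-1)^2=(2x-1)(2y-1).
\]
This condition depends only on \((x,y,z)\). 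Therefore every realization of a point on the equality surface is commuting, and every realization of a point with strict inequality is noncommuting.

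Combining the two steps gives \(C_2(\W_2)\) as the equality part of \(B_2(\W_2)\) and \(I_2(\W_2)\) as the strict-inequality part. The two sets are disjoint because equality and strict inequality cannot hold at the same time.

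The main step is the Lagrange identity. It turns a statement about the pair of states into a statement about the data alone. Everything else is routine bookkeeping, with one case needing care: for the surjectivity construction with \(ab=0\), we must check that such points belong to \(C_2(\W_2)\) and not to \(I_2(\W_2)\). They do, because equality holds automatically when \(ab=0\).
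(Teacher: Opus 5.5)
Your proof is correct and follows the same route as the paper: Bloch coordinates turn $(2x-1,2y-1,2z-1)$ into $(|\vec r|^2,|\vec s|^2,\vec r\cdot\vec s)$, so the inequality is Cauchy--Schwarz, and its equality case (collinearity) is exactly commutation. Your version is slightly more complete than the paper's. You explicitly check that every triple in the stated set is realized, and you justify the collinearity--commutation link via $[\rho,\sigma]=\tfrac{i}{2}(\vec r\times\vec s)\cdot\vec\sigma$ and Lagrange's identity, both of which the paper leaves implicit.
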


\begin{proof}
Write the two states in Bloch form,
\[
\rho=\frac{\Id_2+\bm r\cdot\bm\sigma}{2},
\qquad
\sigma=\frac{\Id_2+\bm s\cdot\bm\sigma}{2},
\]
with \(|\bm r|,|\bm s|\le1\). Then
\[
2x-1=|\bm r|^2,\qquad
2y-1=|\bm s|^2,\qquad
2z-1=\bm r\cdot\bm s.
\]
The inequality is Cauchy--Schwarz. Equality holds precisely when the two Bloch
vectors are linearly dependent, including the case where one vector is zero.
This is exactly the condition \([\rho,\sigma]=0\).
\end{proof}

The qubit result has a simple geometric reason: in the Bloch ball,
commutativity of two qubit states is equivalent to collinearity of their Bloch
vectors, and collinearity is detected by the two lengths and the inner product.
The same reasoning does not survive in dimension three. There, second-order
data still determine only the Hilbert--Schmidt lengths and overlap of the two
states, but not the relative eigenbasis information needed to decide
commutativity.

\begin{proposition}[Second-order incompleteness for qutrits]
\label{prop:qutritW2fail}
The two purities and the Hilbert--Schmidt overlap do not decide qutrit set
coherence; equivalently,
\[
C_3(\W_2)\cap I_3(\W_2)\neq\varnothing.
\]
\end{proposition}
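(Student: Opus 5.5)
Our plan is to exhibit an explicit pair of qutrit families with identical \(\W_2\)-data, one commuting and one not. The geometric picture is this. Write \(\rho=\Id_3/3+R\) and \(\sigma=\Id_3/3+S\) with \(R,S\) traceless Hermitian. Then
\[
x=\tfrac13+\|R\|_2^2,\qquad y=\tfrac13+\|S\|_2^2,\qquad z=\tfrac13+\Tr(RS).
\]
So \(\W_2\) records only the Gram data of \((R,S)\) in the eight-dimensional real space of traceless Hermitian matrices. Commutativity, however, is not a Gram-type condition once the dimension exceeds the qubit case. We can therefore keep lengths and overlap fixed while rotating \(S\) out of the commutant of \(R\).

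First we fix \(\rho\) and choose two candidates for \(\sigma\). We take \(\rho=\diag(1/2,1/2,0)\), which has the degenerate spectrum \((1/2,1/2,0)\), and the pure state \(\sigma_1=|0\rangle\langle0|\). This pair commutes, and its data are
\[
x=\tfrac12,\qquad y=1,\qquad z=\tfrac12.
\]
Next we look for a pure state \(\sigma_2=|\psi\rangle\langle\psi|\) with \(\langle\psi|\rho|\psi\rangle=1/2\) that fails to commute with \(\rho\). For \(\sigma_2\) to commute with \(\rho\), the vector \(\psi\) would have to be an eigenvector of \(\rho\). Those eigenvectors lie either in \(\mathrm{span}(e_0,e_1)\), where the expectation is \(1/2\), or along \(e_2\), where it is \(0\). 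We therefore take
\[
\psi=\bigl(e_0+e_2\bigr)/\sqrt2,\qquad\text{which gives}\qquad \langle\psi|\rho|\psi\rangle=\tfrac14.
\]
That value is too small, so we replace it with a general combination \(\psi=a e_0+b e_2\). Then \(\langle\psi|\rho|\psi\rangle=|a|^2/2\), which is below \(1/2\) unless \(b=0\). The degenerate choice of \(\rho\) is too rigid, and we switch to a nondegenerate spectrum.

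We now take \(\rho=\diag(p_0,p_1,p_2)\) with distinct entries, for instance \((1/2,1/3,1/6)\), and keep \(\sigma_1=|e_1\rangle\langle e_1|\). This commuting pair has \(z=1/3\). For the second state we set \(\psi=(e_0+e_2)/\sqrt2\). Then
\[
z=\frac{p_0+p_2}{2}=\frac13,
\]
and \(y=1\), \(x\) are unchanged. Since \(\psi\) is not an eigenvector of \(\rho\), whose eigenvalues are distinct, the pair \((\rho,\sigma_2)\) does not commute. Hence the common triple \(\bigl(x,1,\tfrac13\bigr)\), with \(x=1/4+1/9+1/36=7/18\), lies in \(C_3(\W_2)\cap I_3(\W_2)\).

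The only real obstacle is making the overlap match, and the key idea for that is choosing \(\rho\) so that some eigenvalue equals an average of two others. Arranging the spectrum in arithmetic progression does this. Everything else is a direct computation, and we will record the explicit matrices and the nonvanishing commutator \([\rho,\sigma_2]\), whose \((0,2)\) entry is \((p_0-p_2)/2\neq0\).
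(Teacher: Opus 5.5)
Your construction is correct: with $\rho=\diag(1/2,1/3,1/6)$, the pairs $(\rho,|e_1\rangle\langle e_1|)$ and $(\rho,|\psi\rangle\langle\psi|)$ with $\psi=(e_0+e_2)/\sqrt2$ both give the triple $(7/18,1,1/3)$, and only the first pair commutes. This is the same strategy as the paper, which keeps $\rho$ fixed and replaces a diagonal $\sigma$ by a non-diagonal one with equal purity and overlap; the paper's instance uses the degenerate $\rho=\diag(1/2,1/2,0)$ with a mixed $\sigma$, so your first attempt failed because $\sigma$ was pure (the commuting overlaps $0$ and $1/2$ are then the extreme values), not because $\rho$ was degenerate, and you can simply drop that dead end from a final write-up.
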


\begin{proof}
Consider the commuting pair
\[
\rho_0=\diag\!\left(\frac12,\frac12,0\right),
\qquad
\sigma_0=\diag\!\left(\frac12,0,\frac12\right).
\]
It satisfies
\[
\Phi_{\W_2}(\rho_0,\sigma_0)=\left(\frac12,\frac12,\frac14\right).
\]
Now set
\[
\rho_1=\diag\!\left(\frac12,\frac12,0\right),
\qquad
\sigma_1=
\begin{pmatrix}
1/4&0&1/4\\
0&1/4&0\\
1/4&0&1/2
\end{pmatrix}.
\]
The matrix \(\sigma_1\) is positive semidefinite and has trace one. Moreover,
\[
\Tr\sigma_1^2=\frac12,
\qquad
\Tr(\rho_1\sigma_1)=\frac14.
\]
Thus
\[
\Phi_{\W_2}(\rho_1,\sigma_1)=\Phi_{\W_2}(\rho_0,\sigma_0).
\]
However, \([\rho_1,\sigma_1]\neq0\), since \(\sigma_1\) has a nonzero matrix
element connecting eigenspaces of \(\rho_1\) with different eigenvalues.
\end{proof}

To overcome this second-order limitation, we next consider the enlarged
two-state scenario consisting of all cyclic words of order at most three. For
\(\W_3\), write
\[
\Phi_{\W_3}(\rho,\sigma)=(x,y,z,a,b,c,d),
\]
where
\[
x=\Tr\rho^2,\quad y=\Tr\sigma^2,\quad z=\Tr(\rho\sigma),
\]
and
\[
a=\Tr\rho^3,\quad b=\Tr\sigma^3,
\quad c=\Tr(\rho^2\sigma),\quad d=\Tr(\rho\sigma^2).
\]
The qutrit case is special because the trace, second moment, and third moment
determine the characteristic polynomial of a density operator. The mixed
moments then constrain how the eigenspaces of the two states can be arranged
relative to each other.

\begin{theorem}[Third-order completeness for qutrits]
\label{thm:qutritW3complete}
\[
C_3(\W_3)\cap I_3(\W_3)=\varnothing.
\]
Equivalently, if a qutrit pair has the same \(\W_3\)-tuple as a commuting
qutrit pair, then the pair itself is commuting.
\end{theorem}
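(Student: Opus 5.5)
The idea is to read off the spectra from the single-state moments, use the mixed moments to pin down the diagonal of one state in the eigenbasis of the other, and then use the purity to kill the off-diagonal part. Suppose \((\rho,\sigma)\) has the same \(\W_3\)-tuple \((x,y,z,a,b,c,d)\) as a commuting qutrit pair \((\rho_0,\sigma_0)\).

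\textbf{Step 1: spectra agree.} For a qutrit density operator, \(\Tr\rho=1\), \(\Tr\rho^2\) and \(\Tr\rho^3\) are the first three power sums of its eigenvalues. By Newton's identities they fix the characteristic polynomial. Hence \(\rho\) and \(\rho_0\) have the same spectrum \(\lambda=(\lambda_1,\lambda_2,\lambda_3)\), and \(\sigma\) and \(\sigma_0\) have the same spectrum \(\mu=(\mu_1,\mu_2,\mu_3)\). Since \((\rho_0,\sigma_0)\) commutes, in a common eigenbasis \(\sigma_0\) has diagonal \(\mu_{\pi(i)}\) for some permutation \(\pi\). Therefore
\[
z=\sum_i\lambda_i\mu_{\pi(i)},\qquad c=\sum_i\lambda_i^2\mu_{\pi(i)}.
\]

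\textbf{Step 2: generic case, \(\rho\) with three distinct eigenvalues.} Work in an eigenbasis of \(\rho\) and put \(p_i=\sigma_{ii}\). Then
\[
\sum_i p_i=1,\qquad \sum_i\lambda_ip_i=z,\qquad \sum_i\lambda_i^2p_i=c.
\]
This is a Vandermonde system, invertible because the \(\lambda_i\) are distinct. The vector \(\mu_\pi\) solves the same system, so \(p=\mu_\pi\). Consequently
\[
\Tr\sigma^2=\sum_i p_i^2+\sum_{i\neq j}|\sigma_{ij}|^2=\sum_i\mu_i^2+\sum_{i\neq j}|\sigma_{ij}|^2 .
\]
But also \(\Tr\sigma^2=y=\sum_i\mu_i^2\). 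So all off-diagonal entries vanish, \(\sigma\) is diagonal in the eigenbasis of \(\rho\), and \([\rho,\sigma]=0\).

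If instead \(\sigma\) has three distinct eigenvalues, I run the same argument with the roles exchanged. The data \(1\), \(z=\Tr(\sigma\rho)\) and \(d=\Tr(\sigma^2\rho)\) determine the diagonal of \(\rho\) in the eigenbasis of \(\sigma\), and \(x\) then forces \(\rho\) to be diagonal there.

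\textbf{Step 3: degenerate cases.} If either state is \(\Id_3/3\), commutativity is trivial. The remaining case is that both spectra have exactly two distinct values. Then I can write
\[
\rho=\alpha\Id_3+\beta Q,\qquad \sigma=\gamma\Id_3+\delta R,
\]
where \(Q,R\) are rank-one projectors onto the nondegenerate eigenvectors and \(\beta,\delta\neq0\). The same form holds for \((\rho_0,\sigma_0)\) with rank-one projectors \(Q_0,R_0\), and with the same \(\alpha,\beta,\gamma,\delta\) because the spectra agree.

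From \(\Tr(\rho\sigma)=z\) we get that \(\Tr(QR)=\Tr(Q_0R_0)\). For rank-one projectors, \([Q,R]=0\) iff \(Q=R\) or \(QR=0\), that is, iff \(\Tr(QR)\in\{0,1\}\). Commutativity of \(\rho_0,\sigma_0\) puts \(\Tr(Q_0R_0)\) in \(\{0,1\}\), hence \([Q,R]=0\) and so \([\rho,\sigma]=0\).

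\textbf{Main obstacle.} The key step is Step 2. Its content is the identification of the diagonal of \(\sigma\) with a permutation of its own spectrum via the Vandermonde inversion, combined with the observation that the purity leaves no room for off-diagonal weight. The other cases are routine bookkeeping.
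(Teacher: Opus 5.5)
Your proof is correct. The spectral step and the nondegenerate case are the same as in the paper: Newton's identities fix the spectra, the words $12$ and $112$ give a Vandermonde system that pins down the diagonal of $\sigma$, and the purity $\Tr\sigma^2$ leaves no room for off-diagonal weight.

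The difference is in the degenerate case. The paper does not split according to the spectrum of $\sigma$. It first conjugates so that $\rho=\rho_0$, with a simple eigenvector $|e_3\rangle$ and a two-dimensional complementary eigenspace. It then uses the words $12$, $122$ and $22$ to get $\langle e_3|\sigma|e_3\rangle=t$ and $\langle e_3|\sigma^2|e_3\rangle=t^2$, where $t=\langle e_3|\sigma_0|e_3\rangle$. Hence $\|(\sigma-t\Id_3)|e_3\rangle\|^2=0$, so $|e_3\rangle$ is an eigenvector of $\sigma$, whatever the spectrum of $\sigma$. You instead use the $1\leftrightarrow2$ symmetry of $\W_3$ to handle the case where $\sigma$ is nondegenerate. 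When both spectra are doubly degenerate, you reduce to the extremal-overlap fact that rank-one projectors commute iff $\Tr(QR)\in\{0,1\}$, which needs only the overlap $z$.

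What each route buys:
\begin{itemize}
\item The paper's zero-variance argument is uniform and never uses the word $222$; only the spectrum of $\rho$ is needed. Its proof therefore shows that the smaller scenario $\W_3\setminus\{222\}$ is already complete for qutrits.
\item Your route needs the spectrum of $\sigma$, both to organize the case split and to match $\gamma,\delta$ with those of $\sigma_0$. In exchange it is symmetric in the two states. It also shows that when both spectra are doubly degenerate, the spectra plus the second-order overlap already decide commutativity, by a mechanism reminiscent of the qubit theorem.
\end{itemize}
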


\begin{proof}
Suppose that \((\rho,\sigma)\) and a commuting pair \((\rho_0,\sigma_0)\) have
the same \(\W_3\)-tuple. We show that \([\rho,\sigma]=0\).

The equalities
\[
\Tr\rho^2=\Tr\rho_0^2,
\qquad
\Tr\rho^3=\Tr\rho_0^3,
\]
together with trace normalization, imply that \(\rho\) and \(\rho_0\) have the
same characteristic polynomial. Since both are Hermitian, they are unitarily
similar. Applying a simultaneous unitary conjugation, we may assume
\[
\rho=\rho_0=A.
\]
Because \((\rho_0,\sigma_0)\) is commuting, choose a basis
\(\{|e_i\rangle\}_{i=1}^3\) in which
\[
A=\diag(a_1,a_2,a_3),
\qquad
\sigma_0=D=\diag(d_1,d_2,d_3).
\]
It remains to prove that \([A,\sigma]=0\).

First assume that \(A\) has three distinct eigenvalues. Set
\[
s_i=\langle e_i|\sigma|e_i\rangle.
\]
The words \(12\) and \(112\), together with normalization, give
\[
\Tr(A^k\sigma)=\Tr(A^kD),\qquad k=0,1,2.
\]
Equivalently,
\[
\sum_{i=1}^3 a_i^k s_i=\sum_{i=1}^3 a_i^k d_i,
\qquad k=0,1,2.
\]
Since the Vandermonde matrix associated with \(a_1,a_2,a_3\) is invertible,
we obtain \(s_i=d_i\) for all \(i\). The equality
\(\Tr\sigma^2=\Tr D^2\) then gives
\[
\sum_i s_i^2+2\sum_{i<j}|\sigma_{ij}|^2=\sum_i d_i^2.
\]
Hence every off-diagonal entry of \(\sigma\) in the eigenbasis of \(A\)
vanishes, so \([A,\sigma]=0\).

Now assume that \(A\) has eigenvalues \(a,a,c\), with \(a\neq c\). Relabeling
if necessary,
\[
A=a\Id_3+(c-a)P,
\qquad
P=|e_3\rangle\langle e_3|.
\]
From \(\Tr(A\sigma)=\Tr(AD)\) and \(\Tr\sigma=\Tr D\), we obtain
\[
\Tr(P\sigma)=d_3.
\]
From \(\Tr(A\sigma^2)=\Tr(AD^2)\) and \(\Tr\sigma^2=\Tr D^2\), we obtain
\[
\Tr(P\sigma^2)=d_3^2.
\]
Therefore
\[
\langle e_3|\sigma|e_3\rangle=d_3,
\qquad
\langle e_3|\sigma^2|e_3\rangle=d_3^2,
\]
or equivalently
\[
\|(\sigma-d_3\Id_3)|e_3\rangle\|^2=0.
\]
Thus \(|e_3\rangle\) is an eigenvector of \(\sigma\). Since \(\sigma\) is
Hermitian, the orthogonal complement of \(|e_3\rangle\) is invariant under
\(\sigma\). Hence \(\sigma\) is block diagonal with respect to the eigenspaces
of \(A\), and \([A,\sigma]=0\). If \(A\) is scalar, the conclusion is
immediate. This proves the theorem.
\end{proof}

\begin{corollary}
\label{cor:qutrittest}
Let \(v=(x,y,z,a,b,c,d)\in B_3(\W_3)\). The eigenvalues \(p_1,p_2,p_3\) of
\(\rho\) are the roots of
\[
t^3-t^2+\frac{1-x}{2}t-\frac{1-3x+2a}{6}=0,
\]
and the eigenvalues \(q_1,q_2,q_3\) of \(\sigma\) are obtained analogously
from \((y,b)\). Then \(v\in C_3(\W_3)\) if and only if the mixed moments
\(z,c,d\) are compatible with a simultaneous diagonal realization having
spectra \(\{p_i\}\) and \(\{q_i\}\). In the nondegenerate case this amounts to
checking whether there exists \(\pi\in S_3\) such that
\[
z=\sum_i p_iq_{\pi(i)},\qquad
c=\sum_i p_i^2q_{\pi(i)},\qquad
d=\sum_i p_iq_{\pi(i)}^2.
\]
\end{corollary}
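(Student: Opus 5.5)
The eigenvalue claims come from Newton's identities. For a qutrit density operator with eigenvalues $p_1,p_2,p_3$, the power sums are $e_1=\Tr\rho=1$, $\Tr\rho^2=x$ and $\Tr\rho^3=a$. Newton's identities then give
\[
e_2=\frac{1-x}{2},\qquad e_3=\frac{1-3x+2a}{6}.
\]
For the second one, use $3e_3=p_3-e_1p_2+e_2p_1$ with $p_1=1$, $p_2=x$, $p_3=a$. The characteristic polynomial is therefore $t^3-t^2+e_2t-e_3$, which is exactly the displayed cubic. The same computation applied to $(y,b)$ gives the spectrum $\{q_i\}$ of $\sigma$.

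\emph{Forward direction.} Suppose $v\in C_3(\W_3)$. Then $v=\Phi_{\W_3}(\rho_0,\sigma_0)$ for some commuting pair. Diagonalize both states simultaneously as $\rho_0=\diag(p_1,p_2,p_3)$ and $\sigma_0=\diag(q_{\pi(1)},q_{\pi(2)},q_{\pi(3)})$. This is possible because $\rho_0$ and $\sigma_0$ share the pure moments of the tuple, so they have the spectra computed above. Evaluating the mixed words on diagonal matrices gives $z=\sum_i p_iq_{\pi(i)}$, $c=\sum_ip_i^2q_{\pi(i)}$ and $d=\sum_ip_iq_{\pi(i)}^2$. In other words, $v$ is realized by a simultaneous diagonal family with the prescribed spectra.

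\emph{Converse.} Suppose the mixed moments are realized by diagonal matrices $\diag(p_i)$ and $\diag(q_{\pi(i)})$. Both matrices are density operators, since the roots of the two cubics are the eigenvalues of genuine states, given $v\in B_3(\W_3)$. They commute, and their $\W_3$-tuple equals $v$ because the pure moments match by construction. Hence $v\in C_3(\W_3)$.

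Theorem~\ref{thm:qutritW3complete} is what makes this a genuine test on $B_3(\W_3)$. By that theorem, the sets $C_3(\W_3)$ and $I_3(\W_3)$ are disjoint. So whenever the diagonal compatibility check fails, $v$ lies in $I_3(\W_3)$, and every pair realizing $v$ is set coherent.

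The only delicate point is the phrase ``in the nondegenerate case''. When the spectra are degenerate, permutations still enumerate all diagonal arrangements, so the same finite check over $S_3$ works and some permutations simply coincide. The argument therefore goes through verbatim, and the restriction only signals that distinct permutations may give identical data. I do not expect any step to be a real obstacle; the main care needed is in the Newton-identity arithmetic for $e_3$.
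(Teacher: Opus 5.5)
Your argument is correct and follows the same route as the paper's implicit reasoning: Newton's identities fix both spectra, any commuting realization can be written as $\diag(p_i)$, $\diag(q_{\pi(i)})$ in a common eigenbasis, and, as you note and consistent with the paper's remark, the $S_3$ check stays valid when eigenvalues are degenerate. One small correction: the equivalence itself does not use Theorem~\ref{thm:qutritW3complete}, and ``check fails $\Rightarrow$ every realizing pair is set coherent'' follows from the definition of $C_3(\W_3)$ alone; the theorem is needed for the other, nontrivial half of the test, namely ``check succeeds $\Rightarrow$ every pair with data $v$ commutes.''
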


\begin{remark}
When degeneracies are present, the permutation test should be understood after
grouping equal eigenvalues. Equivalently, one checks whether there exists a
simultaneously diagonal pair with spectra \(\{p_i\}\) and \(\{q_i\}\) that
realizes the prescribed mixed moments.
\end{remark}

The qutrit theorem is the positive middle step in the hierarchy. Its proof
uses a fact special to three dimensions: the first three moments determine the
spectrum. In dimension four, the same order-\(\le3\) information no longer
controls enough of the relative eigenspace structure. The following example
shows that third order is not universal.

\begin{proposition}[Third-order incompleteness in dimension four]
\label{prop:d4fail}
The full order-\(\le3\) two-state Bargmann data are not complete for set
coherence in dimension four; equivalently,
\[
C_4(\W_3)\cap I_4(\W_3)\neq\varnothing.
\]
\end{proposition}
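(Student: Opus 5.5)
The plan is to exhibit an explicit pair of families in dimension four: a commuting pair \((\rho_0,\sigma_0)\) and a noncommuting pair \((\rho_1,\sigma_1)\) with identical \(\W_3\)-tuples. The construction exploits the degenerate mechanism that was harmless in the qutrit proof (Theorem~\ref{thm:qutritW3complete}). There, a rank-one spectral projector \(P\) was forced to be an eigenvector of \(\sigma\) by the two scalar equations \(\Tr(P\sigma)=d_3\) and \(\Tr(P\sigma^2)=d_3^2\). In \(\C^4\) these two equations can hold for a rank-two projector without forcing block-diagonality.

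Concretely, I take both states to be mixtures of the identity and a rank-two projector:
\[
\rho=\tfrac12 P,\qquad \sigma=\alpha R+\beta(\Id_4-R),\qquad 2\alpha+2\beta=1,\ \alpha\neq\beta,
\]
where \(P=|e_1\rangle\langle e_1|+|e_2\rangle\langle e_2|\) and \(R\) is a rank-two projector. Every order-\(\le3\) word is then computable.
\begin{itemize}[leftmargin=*]
\item The pure words \(\Tr\rho^k\) and \(\Tr\sigma^k\) depend only on the spectra, which are fixed.
\item Using \(P^2=P\) and \(R^2=R\), the mixed words reduce to
\[
\Tr(\rho\sigma)=\tfrac12\bigl(2\beta+(\alpha-\beta)t\bigr),
\]
\[
\Tr(\rho^2\sigma)=\tfrac14\bigl(2\beta+(\alpha-\beta)t\bigr),
\]
\[
\Tr(\rho\sigma^2)=\tfrac12\bigl(2\beta^2+(\alpha^2-\beta^2)t\bigr),
\]
where \(t=\Tr(PR)\).
\end{itemize}
Thus the whole \(\W_3\)-tuple depends on \(R\) only through the single scalar \(t=\Tr(PR)\).

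For the commuting representative I take \(R_0=|e_1\rangle\langle e_1|+|e_3\rangle\langle e_3|\), which is diagonal and gives \(t=1\). For the coherent representative I take \(R_1\) to be the projector onto the span of the two orthonormal vectors
\[
u=\cos\theta\,|e_1\rangle+\sin\theta\,|e_3\rangle,\qquad v=\sin\theta\,|e_2\rangle+\cos\theta\,|e_4\rangle,
\]
with \(0<\theta<\pi/2\). Then \(\Tr(PR_1)=\cos^2\theta+\sin^2\theta=1\), so the two tuples coincide. It remains to check that \([P,R_1]\neq0\). This holds because \(PR_1\) contains the term \(\cos\theta\sin\theta\,|e_1\rangle\langle e_3|\), while \(R_1P\) has no \(|e_1\rangle\langle e_3|\) component. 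Consequently \([\rho_1,\sigma_1]=\tfrac12(\alpha-\beta)[P,R_1]\neq0\), and both matrices are valid states (for instance \(\alpha=\tfrac12,\ \beta=0\), or any \(\alpha,\beta\ge0\)).

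The main obstacle is conceptual rather than computational: one must find a configuration in which every order-\(\le3\) mixed trace collapses to one overlap parameter. Degenerate two-level spectra achieve this, since each word reduces to \(\Tr(PR)\). One must then see that \(\Tr(PR)\) can be matched by a noncommuting pair of projectors, which requires the "crossed" choice of \(R_1\) above; a single rotation in one plane would lower \(t\). As a consistency check, \(\Tr(\rho\sigma\rho\sigma)\) involves \(\Tr(PRPR)=\cos^4\theta+\sin^4\theta<1\), so the example is indeed invisible at order three but detected by \(\W_4\).
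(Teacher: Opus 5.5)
Your proof is correct and is essentially the paper's: the paper's example is exactly your special case \(\alpha=\tfrac12\), \(\beta=0\), \(\theta=\pi/4\). There \(u,v\) become \((|e_1\rangle+|e_3\rangle)/\sqrt2\) and \((|e_2\rangle+|e_4\rangle)/\sqrt2\), and all order-\(\le3\) words are matched through idempotence and \(\Tr(PR_0)=\Tr(PR_1)=1\); your family in \((\alpha,\beta,\theta)\) is a harmless generalization of the same mechanism.
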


\begin{proof}
Let
\[
\rho=\frac12\diag(1,1,0,0),
\qquad
\sigma_0=\frac12\diag(1,0,1,0).
\]
Define
\[
|\eta_1\rangle=\frac{|e_1\rangle+|e_3\rangle}{\sqrt2},
\qquad
|\eta_2\rangle=\frac{|e_2\rangle+|e_4\rangle}{\sqrt2},
\]
and
\[
\sigma_1=\frac12\bigl(
|\eta_1\rangle\langle\eta_1|+|\eta_2\rangle\langle\eta_2|
\bigr).
\]
Then \([\rho,\sigma_0]=0\), whereas \([\rho,\sigma_1]\neq0\).

The states \(\rho,\sigma_0,\sigma_1\) are all one half times rank-two
projections. Hence
\[
\Tr\rho^2=\Tr\sigma_0^2=\Tr\sigma_1^2=\frac12,
\qquad
\Tr\rho^3=\Tr\sigma_0^3=\Tr\sigma_1^3=\frac14.
\]
Moreover,
\[
\Tr(\rho\sigma_0)=\Tr(\rho\sigma_1)=\frac14.
\]
Since
\[
\rho^2=\frac12\rho,
\qquad
\sigma_0^2=\frac12\sigma_0,
\qquad
\sigma_1^2=\frac12\sigma_1,
\]
we also have
\[
\Tr(\rho^2\sigma_0)=\Tr(\rho^2\sigma_1)=\frac18,
\qquad
\Tr(\rho\sigma_0^2)=\Tr(\rho\sigma_1^2)=\frac18.
\]
Therefore
\[
\Phi_{\W_3}(\rho,\sigma_0)=\Phi_{\W_3}(\rho,\sigma_1)
=\left(\frac12,\frac12,\frac14,\frac14,\frac14,\frac18,\frac18\right),
\]
although the first pair is commuting and the second is not.
\end{proof}

For every \(d>4\), the same construction can be embedded by taking a direct
sum with a zero block of size \(d-4\). This preserves positivity, trace one,
all trace words, and commutativity or noncommutativity. Hence the failure
persists in every dimension \(d\ge4\). The appendix shows that analogous
lower-order failures persist for many-state families.

\section{Fourth-Order Criterion and the Bargmann Gap}
\label{sec:fourth}

The hierarchy so far has a clear message. Second order is complete only for
qubits; third order reaches qutrits but fails in dimension four. We now show
that the next order succeeds in all finite dimensions.

Consider
\[
\W_4=\{1122,1212\}.
\]
For a pair \((\rho,\sigma)\), write
\[
u=\Tr(\rho^2\sigma^2),
\qquad
v=\Tr(\rho\sigma\rho\sigma).
\]
The two words contain the same two copies of each state. They differ only in
the order in which these copies are multiplied. This ordering sensitivity is
precisely what detects noncommutativity.

\begin{theorem}[Universal fourth-order Bargmann criterion]
\label{thm:fourth-family}
For any finite-dimensional density operators \(\rho\) and \(\sigma\),
\[
\Gap(\rho,\sigma)
:=\Tr(\rho^2\sigma^2)-\Tr(\rho\sigma\rho\sigma)
=\frac12\|[\rho,\sigma]\|_2^2.
\]
Consequently,
\[
\Gap(\rho,\sigma)=0
\quad\Longleftrightarrow\quad
[\rho,\sigma]=0.
\]
Thus \(\W_4\) decides two-state set coherence in every finite dimension. More
generally, a finite family \(\vec\rho=(\rho_1,\ldots,\rho_n)\) is set
incoherent if and only if
\[
\Gap(\rho_i,\rho_j)=0
\qquad\text{for all }i<j.
\]
Equivalently, the pairwise fourth-order scenario
\[
\W_4^{(n)}=\{iijj,ijij:1\le i<j\le n\}
\]
decides set coherence for arbitrary finite families.
\end{theorem}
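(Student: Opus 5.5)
The plan is to compute the Hilbert--Schmidt norm of the commutator directly and compare it with the two words in \(\W_4\). Since \(\rho\) and \(\sigma\) are Hermitian, \([\rho,\sigma]^\dagger=[\sigma,\rho]=-[\rho,\sigma]\), so \([\rho,\sigma]\) is anti-Hermitian. Hence
\[
\|[\rho,\sigma]\|_2^2=\Tr\bigl([\rho,\sigma]^\dagger[\rho,\sigma]\bigr)=-\Tr\bigl([\rho,\sigma]^2\bigr).
\]
Expanding \((\rho\sigma-\sigma\rho)^2\) gives four terms:
\[
\Tr(\rho\sigma\rho\sigma)-\Tr(\rho\sigma\sigma\rho)-\Tr(\sigma\rho\rho\sigma)+\Tr(\sigma\rho\sigma\rho).
\]
Cyclicity of the trace reduces the first and last terms to \(v=\Tr(\rho\sigma\rho\sigma)\), and the two middle terms to \(u=\Tr(\rho^2\sigma^2)\). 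So \(\|[\rho,\sigma]\|_2^2=2u-2v\), which is the claimed identity \(\Gap=\tfrac12\|[\rho,\sigma]\|_2^2\). In passing, this shows \(u\) and \(v\) are real, since both equal traces of products whose adjoints are cyclic rearrangements of themselves.

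The equivalence \(\Gap=0\Leftrightarrow[\rho,\sigma]=0\) then follows from definiteness of the Hilbert--Schmidt norm. For two states, \(\W_4\)-data determine \(\Gap\), and the data of a commuting pair have \(\Gap=0\) while those of a noncommuting pair have \(\Gap>0\). Therefore \(C_d(\W_4)\cap I_d(\W_4)=\varnothing\) for every \(d\), and no positivity or trace normalization of the states is needed beyond Hermiticity.

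For finite families, I would invoke the simultaneous-diagonalization criterion stated in Section~\ref{sec:setup}: \(\vec\rho\) is set incoherent iff \([\rho_i,\rho_j]=0\) for all \(i<j\). Applying the pairwise identity, this holds iff \(\Gap(\rho_i,\rho_j)=0\) for all \(i<j\). Each such \(\Gap\) is a difference of the entries \(\Delta_{iijj}\) and \(\Delta_{ijij}\) of \(\Phi_{\W_4^{(n)}}\). Hence coinciding \(\W_4^{(n)}\)-tuples force the same verdict, and \(\W_4^{(n)}\) is complete.

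There is no real obstacle. The only step needing care is the sign and cyclic bookkeeping in the expansion of \(\Tr([\rho,\sigma]^2)\), together with the anti-Hermiticity that turns \(-\Tr([\rho,\sigma]^2)\) into a nonnegative norm.
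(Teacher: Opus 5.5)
Your proposal is correct and follows essentially the same route as the paper. The paper expands $\Tr([\rho,\sigma]^\dagger[\rho,\sigma])$ with $[\rho,\sigma]^\dagger=\sigma\rho-\rho\sigma$, which is your $-\Tr([\rho,\sigma]^2)$ computation, and then uses faithfulness of the Hilbert--Schmidt norm and pairwise commutativity for the family statement; your explicit step that equal $\W_4^{(n)}$-tuples give equal $\Gap$ values, and hence $C_d\cap I_d=\varnothing$, just spells out what the paper leaves implicit.
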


\begin{proof}
Since \(\rho\) and \(\sigma\) are Hermitian,
\[
[\rho,\sigma]^\dagger=\sigma\rho-\rho\sigma.
\]
Expanding the squared Hilbert--Schmidt norm and using cyclicity of the trace,
\begin{align*}
\Tr([\rho,\sigma]^\dagger[\rho,\sigma])
&=\Tr\bigl((\sigma\rho-\rho\sigma)(\rho\sigma-\sigma\rho)\bigr)\\
&=\Tr(\sigma\rho^2\sigma)-\Tr(\sigma\rho\sigma\rho)\\
&\quad -\Tr(\rho\sigma\rho\sigma)+\Tr(\rho\sigma^2\rho)\\
&=2\Tr(\rho^2\sigma^2)-2\Tr(\rho\sigma\rho\sigma).
\end{align*}
Therefore \(\Gap(\rho,\sigma)=\frac12\|[\rho,\sigma]\|_2^2\). Since the
Hilbert--Schmidt norm is faithful, \(\Gap(\rho,\sigma)=0\) if and only if
\([\rho,\sigma]=0\). For a finite Hermitian family, simultaneous
diagonalizability is equivalent to pairwise commutativity. Applying the
two-state criterion to every unordered pair proves the finite-family
statement.
\end{proof}

The theorem gives a decision rule. It also gives a quantitative invariant. For
a finite family, define the total Bargmann gap
\[
\BG(\vec\rho)=\sum_{i<j}\Gap(\rho_i,\rho_j).
\]
Then
\[
\BG(\vec\rho)=0
\quad\Longleftrightarrow\quad
\vec\rho\ \text{is set incoherent}.
\]
Thus \(\BG\) provides a faithful basis-independent indicator of the pairwise
noncommutativity that prevents a common incoherent basis.

The spectral form of the gap explains what the fourth-order data are detecting.
Let
\[
\rho=\sum_i\lambda_i|i\rangle\langle i|
\]
be the spectral decomposition of \(\rho\), and write
\(\sigma_{ij}=\langle i|\sigma|j\rangle\) in this eigenbasis. Then
\[
(\rho\sigma)_{ij}=\lambda_i\sigma_{ij},
\qquad
(\sigma\rho)_{ij}=\lambda_j\sigma_{ij},
\]
so
\(
[\rho,\sigma]_{ij}=(\lambda_i-\lambda_j)\sigma_{ij}.
\)
Hence
\[
\|[\rho,\sigma]\|_2^2
=\sum_{i,j}(\lambda_i-\lambda_j)^2|\sigma_{ij}|^2.
\]
The diagonal terms vanish. Since \(\sigma\) is Hermitian, the double sum counts
each unordered pair twice, and therefore
\[
\Gap(\rho,\sigma)
=\sum_{i<j}(\lambda_i-\lambda_j)^2|\sigma_{ij}|^2.
\]
Thus the gap detects exactly the coherences of \(\sigma\) between eigenspaces
of \(\rho\) with different eigenvalues. Matrix elements inside a degenerate
eigenspace do not contribute, as they do not obstruct simultaneous
diagonalization with \(\rho\).

The same formula also makes the noise behavior transparent. For depolarized
states
\[
\rho_p=(1-p)\rho+p\frac{\Id_d}{d},
\qquad
\sigma_q=(1-q)\sigma+q\frac{\Id_d}{d},
\]
the identity part commutes with every operator, and hence
\[
[\rho_p,\sigma_q]=(1-p)(1-q)[\rho,\sigma].
\]
Consequently,
\[
\Gap(\rho_p,\sigma_q)=(1-p)^2(1-q)^2\Gap(\rho,\sigma).
\]
Depolarizing noise therefore suppresses the fourth-order signal by a known
multiplicative factor.

\section{Discussion and Outlook}
\label{sec:conclusion}

We have determined a low-order Bargmann-invariant hierarchy for set coherence.
We begin  with second-order data. For qubits, the Hilbert--Schmidt
lengths and overlap are sufficient because commutativity is equivalent to
collinearity of Bloch vectors. In dimension three, this metric information is
no longer enough. Third-order data then become sufficient for qutrit pairs,
because they determine the spectra and enforce the required block structure.
The same mechanism fails in dimension four, where even the full pairwise
order-\(\le3\) data can miss noncommutativity.

The first dimension-independent pairwise criterion appears at fourth order.
The two cyclic words \(1122\) and \(1212\) compare different orderings of the
same two states, and their difference is exactly one half of the squared
Hilbert--Schmidt norm of the commutator. This gives a compact and
basis-independent criterion for two-state set coherence in every finite
dimension. Applied to all unordered pairs, it yields a complete
Bargmann-invariant test for arbitrary finite families.

Conceptually, the hierarchy shows that different Bargmann orders capture
different layers of relational information. Second-order traces see metric
overlap. Third-order traces can capture enough spectral information in low
dimension. Fourth-order ordering-sensitive traces are the first pairwise
invariants that universally detect the noncommutativity responsible for set
coherence.

Several directions remain open. One may seek smaller complete scenarios under
additional assumptions, such as fixed spectra, fixed ranks, known degeneracy
patterns, or restricted state families. It would
also be interesting to develop analogous low-order Bargmann hierarchies for
other relational quantum resources, including set imaginarity, measurement
incompatibility, and channel incompatibility~\cite{Uola2020,Heinosaari2016}.

\vskip 5pt
\noindent\emph{Note add:} Shortly after submitting our paper to Physical Review A, we recently became aware of a related article on commutativity involving a single equality with Bargmann invariants of order 4  (see Ref. \cite{Wagner2026bBargmannScenarios}).
\vskip 5pt

\begin{acknowledgments}
This work was supported by the Guangdong Basic and Applied Basic Research
Foundation under Grant No. 2024A1515030023 and the National Natural Science
Foundation of China under Grant No. 12371458.
\end{acknowledgments}

\appendix

\section{Many-State Lower-Order Limitations}
\label{app:many}

The lower-order negative results in the main text were stated for two states.
Here we record analogous limitations for finite families. These examples show
that the lower-order failures are not artifacts of the two-state setting.

\subsection{Pairwise second-order data for qutrit families}

Let
\[
\W_2^{(n)}=\{ij:1\le i\le j\le n\},
\]
which records all pairwise Hilbert--Schmidt inner products, including
purities.

\begin{proposition}
For every \(n\ge2\),
\[
C_3(\W_2^{(n)})\cap I_3(\W_2^{(n)})\neq\varnothing.
\]
Thus all pairwise second-order data can be identical for a jointly diagonal
qutrit family and for a set-coherent qutrit family.
\end{proposition}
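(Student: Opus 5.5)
The plan is to build, for each \(n\ge2\), two qutrit families with identical pairwise Hilbert--Schmidt data, one jointly diagonal and one set coherent. I will extend the two-state counterexample of Proposition~\ref{prop:qutritW2fail} by padding both families with copies of a state that commutes with everything and whose inner products do not depend on which of the two families it sits in. The cleanest choice is the maximally mixed state \(\Id_3/3\). For any density operator \(\tau\) we have \(\Tr(\tau\,\Id_3/3)=1/3\) and \(\Tr(\Id_3/3)^2=1/3\), so every word involving a padded label takes a value that is the same in both families.

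Concretely, I take
\[
\vec\rho^{(0)}=\bigl(\rho_0,\sigma_0,\tfrac{\Id_3}{3},\ldots,\tfrac{\Id_3}{3}\bigr),
\qquad
\vec\rho^{(1)}=\bigl(\rho_1,\sigma_1,\tfrac{\Id_3}{3},\ldots,\tfrac{\Id_3}{3}\bigr),
\]
with \(n-2\) padded entries and \((\rho_0,\sigma_0)\), \((\rho_1,\sigma_1)\) as in the proof of Proposition~\ref{prop:qutritW2fail}. The words in \(\W_2^{(n)}\) split into three classes, which I check in turn.

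\begin{itemize}
\item Words \(ij\) with \(i,j\in\{1,2\}\) agree because that proposition already gives \(\Phi_{\W_2}(\rho_0,\sigma_0)=\Phi_{\W_2}(\rho_1,\sigma_1)=(1/2,1/2,1/4)\).
\item Words with exactly one index in \(\{3,\ldots,n\}\) equal \(1/3\) in both families, since every state has trace one.
\item Words with both indices in \(\{3,\ldots,n\}\) equal \(1/3\) in both families.
\end{itemize}

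For \(n=2\) the statement is exactly Proposition~\ref{prop:qutritW2fail}.

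It remains to confirm the classification of the two families. The first family is jointly diagonal: \(\rho_0,\sigma_0\) are diagonal and the identity is diagonal. The second family is set coherent because \([\rho_1,\sigma_1]\neq0\), and by the pairwise-commutativity characterization a single noncommuting pair already rules out a common diagonalizing basis.

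I do not expect a real obstacle. The only point that needs care is that the padded states contribute identical data in both families, and trace normalization handles that. If one prefers distinct padded states, any fixed diagonal states would also work, but only if their overlaps with \(\sigma_0\) and \(\sigma_1\) agree. For example, \(\diag(t,t,1-2t)\)-type states would need checking against the entries of \(\sigma_1\), so the maximally mixed choice is the safe one.
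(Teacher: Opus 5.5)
Your proof is correct, but it takes a different route from the paper. You reduce to Proposition~\ref{prop:qutritW2fail} and pad both pairs with $n-2$ copies of $\Id_3/3$. Trace normalization makes every word with a padded label equal to $1/3$ in both families, and the single noncommuting pair $(\rho_1,\sigma_1)$ already rules out a common basis.

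The paper instead builds every member as $\Id_3/3+\epsilon(a_kX+b_kY)$. It uses two Hilbert--Schmidt orthonormal traceless directions: the commuting pair $(D_1,D_2)$ in one family and the noncommuting pair $(A,B)$ in the other. Then $\Tr(\rho_i\rho_j)=1/3+\epsilon^2\,r_i\cdot r_j$ depends only on the Gram matrix of the coefficient vectors $r_k$.

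What your route buys is brevity. It reuses an existing example and needs no positivity-for-small-$\epsilon$ step. The cost is that your coherent family is the two-state example in disguise: all but two members are the same trivial state, and only one pair fails to commute.

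The paper's construction gives more. All $n$ members can be distinct and full rank, and every pair with linearly independent $r_i,r_j$ is noncommuting. It also exposes the mechanism: second-order data see only the Hilbert--Schmidt Gram matrix of the traceless parts, so any isometry between traceless directions preserves them, not just unitary conjugation. The same mechanism is reused with cubic tensors in the next subsection. This is what supports the appendix's claim that the lower-order failures are not artifacts of the two-state setting. For the proposition as literally stated, your padding suffices.

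Incidentally, your hedge about $\diag(t,t,1-2t)$ padding resolves favorably. Its overlap with both $\sigma_0$ and $\sigma_1$ equals $(1-t)/2$, so such states would also work.
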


\begin{proof}
Let
\[
D_1=\frac1{\sqrt2}
\begin{pmatrix}
1&0&0\\
0&-1&0\\
0&0&0
\end{pmatrix},
\qquad
D_2=\frac1{\sqrt6}
\begin{pmatrix}
1&0&0\\
0&1&0\\
0&0&-2
\end{pmatrix}.
\]
These matrices are traceless, Hermitian, Hilbert--Schmidt orthonormal, and
commuting. Define
\[
A=D_1,
\qquad
B=\frac1{\sqrt2}
\begin{pmatrix}
0&0&1\\
0&0&0\\
1&0&0
\end{pmatrix}.
\]
Then \(A,B\) are traceless, Hermitian, and Hilbert--Schmidt orthonormal, but
\[
[A,B]\neq0.
\]

Choose real vectors
\[
r_k=(a_k,b_k)\in\mathbb R^2,
\qquad k=1,\ldots,n,
\]
not all lying on one line. For sufficiently small \(\epsilon>0\), define
\[
\rho_k^{\mathrm{cl}}
=\frac{\Id_3}{3}+\epsilon(a_kD_1+b_kD_2),
\]
and
\[
\rho_k^{\mathrm q}
=\frac{\Id_3}{3}+\epsilon(a_kA+b_kB).
\]
All these matrices are density operators when \(\epsilon\) is small enough.
The first family is commuting. The second is not commuting, because
\[
[\rho_i^{\mathrm q},\rho_j^{\mathrm q}]
=\epsilon^2(a_ib_j-b_ia_j)[A,B],
\]
which is nonzero for some \(i,j\).

For all \(i,j\),
\[
\Tr(\rho_i^{\mathrm{cl}}\rho_j^{\mathrm{cl}})
=\frac13+\epsilon^2(a_ia_j+b_ib_j),
\]
and the same formula holds for \(\rho_i^{\mathrm q},\rho_j^{\mathrm q}\).
Hence the two families have identical \(\W_2^{(n)}\)-data, while one is set
incoherent and the other is set coherent.
\end{proof}

\subsection{Pairwise third-order data for four-dimensional families}

Let
\[
\W_{\le3}^{(n)}
=\bigcup_{1\le i<j\le n}
\{ii,jj,ij,iii,jjj,iij,ijj\}.
\]

\begin{proposition}
For every \(n\ge2\),
\[
C_4(\W_{\le3}^{(n)})\cap I_4(\W_{\le3}^{(n)})\neq\varnothing.
\]
Thus all pairwise words of order at most three can be identical for a jointly
diagonal four-dimensional family and for a set-coherent four-dimensional
family.
\end{proposition}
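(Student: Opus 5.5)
The plan is to reduce to the two-state counterexample of Proposition~\ref{prop:d4fail} by padding the family with repeated copies, and to exploit the fact that for states of the form $\tfrac12 P$, with $P$ a rank-two projection, every word of order at most three is determined by the pairwise overlaps $\Tr(P_iP_j)$.

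First I record the structural observation. Let $\rho_i=\tfrac12P_i$ with each $P_i$ a rank-two projection on $\C^4$. Then $\rho_i^2=\tfrac12\rho_i$, which gives $\Tr\rho_i^2=\tfrac12$ and $\Tr\rho_i^3=\tfrac14$. It also gives $\Tr(\rho_i\rho_j)=\tfrac14\Tr(P_iP_j)$ and
\[
\Tr(\rho_i^2\rho_j)=\Tr(\rho_i\rho_j^2)=\tfrac18\Tr(P_iP_j).
\]
Hence the entire $\W_{\le3}^{(n)}$-tuple of such a family is a function of the Gram data $\bigl(\Tr(P_iP_j)\bigr)_{i<j}$ alone. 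It therefore suffices to exhibit two families of rank-two projections with the same pairwise overlaps, one commuting and one not.

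Next I take $\rho$, $\sigma_0$, $\sigma_1$ from the proof of Proposition~\ref{prop:d4fail}. The set-incoherent family is
\[
\vec\rho^{\,\mathrm{cl}}=(\rho,\sigma_0,\sigma_0,\ldots,\sigma_0),
\]
and the set-coherent family is
\[
\vec\rho^{\,\mathrm q}=(\rho,\sigma_1,\sigma_1,\ldots,\sigma_1),
\]
each with $n-1$ copies of the second state. The definition of set coherence concerns tuples, so repetition is allowed. I then compare the pairwise data. For a pair $(1,j)$ with $j\ge2$, the overlap is $\Tr(\rho\sigma_0)=\Tr(\rho\sigma_1)=\tfrac14$, as already computed. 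For a pair $(i,j)$ with $2\le i<j$, both states in the pair coincide, so the overlap is $\Tr\sigma_0^2=\Tr\sigma_1^2=\tfrac12$. By the structural observation, the two $\W_{\le3}^{(n)}$-tuples agree. The classical family is jointly diagonal, and the quantum family is set coherent because $[\rho,\sigma_1]\neq0$.

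There is no real obstacle, since everything reduces to the already-verified two-state computation. The only point needing care is that the scenario includes pairs among the repeated states; the observation that equal rank-two projections have equal self-overlaps handles this uniformly.

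If one prefers a family of distinct states, a fallback is to perturb the repeated copies by mixing with $\Id_4/4$ at distinct small weights, applied identically in both families. This preserves the matching of data because the maps $\sigma\mapsto(1-t)\sigma+t\Id_4/4$ act the same way on all trace words. However, the cleaner repeated-copy construction already proves the proposition as stated.
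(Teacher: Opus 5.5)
Your proof is correct, and it takes a different route from the paper's. You pad the two-state counterexample of Proposition~\ref{prop:d4fail} with $n-2$ further copies of the second state. Your key observation is that for states of the form $\frac12P$, with $P$ a rank-two projection, every word in $\W_{\le3}^{(n)}$ reduces to the overlaps $\Tr(P_iP_j)$. Since $C_4$ and $I_4$ are defined over $n$-tuples in $\D(\C^4)^n$, repeated entries are admissible, so the argument is complete; your depolarized fallback also checks out.

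The paper instead works perturbatively around the maximally mixed state. It sets $\rho_k=\Id_4/4+\epsilon X_k$, with either $X_k=a_kD_1+b_kD_2$ (diagonal) or $X_k=a_kA+b_kB$, where $A=\sigma_z\otimes\Id_2/2$ and $B=\sigma_x\otimes\Id_2/2$. It notes that the two pairs have equal quadratic trace tensors and vanishing cubic trace tensors. Expanding in $\epsilon$, every word of order at most three depends only on these tensors.

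What your approach buys is brevity: it reuses an already verified two-state computation. What the paper's approach buys is a genuinely many-state example. If the $r_k$ are pairwise linearly independent, the states are distinct and every pair $(\rho_i^{\mathrm q},\rho_j^{\mathrm q})$ fails to commute. In your family, by contrast, all of the set coherence sits in the pairs involving $\rho$. This holds in your fallback too, since the depolarized copies still commute among themselves. The difference matters for the appendix's stated aim of showing that the lower-order failures are ``not artifacts of the two-state setting.'' The paper's tensor-matching template also mirrors its qutrit second-order construction, so both appendix results come from one mechanism.
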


\begin{proof}
Let
\begin{align*}
D_1&=\frac1{\sqrt2}\diag(1,-1,0,0),\\
D_2&=\frac1{\sqrt2}\diag(0,0,1,-1).
\end{align*}
Let
\[
A=\frac12
\begin{pmatrix}
1&0&0&0\\
0&1&0&0\\
0&0&-1&0\\
0&0&0&-1
\end{pmatrix},
\qquad
B=\frac12
\begin{pmatrix}
0&0&1&0\\
0&0&0&1\\
1&0&0&0\\
0&1&0&0
\end{pmatrix}.
\]
Equivalently,
\[
A=\frac{\sigma_z\otimes\Id_2}{2},
\qquad
B=\frac{\sigma_x\otimes\Id_2}{2}.
\]
Both pairs have the same quadratic trace tensor and vanishing cubic trace
tensor:
\[
\Tr(D_\alpha D_\beta)=\Tr(A_\alpha A_\beta),
\]
and
\[
\Tr(D_\alpha D_\beta D_\gamma)
=\Tr(A_\alpha A_\beta A_\gamma)=0
\]
for all \(\alpha,\beta,\gamma\in\{1,2\}\), where \((A_1,A_2)=(A,B)\).

Choose real vectors \(r_k=(a_k,b_k)\in\mathbb R^2\) not all collinear and set
\[
X_k^{\mathrm{cl}}=a_kD_1+b_kD_2,
\qquad
X_k^{\mathrm q}=a_kA+b_kB.
\]
For sufficiently small \(\epsilon>0\), define
\[
\rho_k^{\mathrm{cl}}=\frac{\Id_4}{4}+\epsilon X_k^{\mathrm{cl}},
\qquad
\rho_k^{\mathrm q}=\frac{\Id_4}{4}+\epsilon X_k^{\mathrm q}.
\]
The first family is diagonal. The second is set coherent because
\[
[\rho_i^{\mathrm q},\rho_j^{\mathrm q}]
=\epsilon^2(a_ib_j-b_ia_j)[A,B],
\]
which is nonzero for some \(i,j\). Expanding
\[
\rho_k=\frac{\Id_4}{4}+\epsilon X_k
\]
shows that every word in \(\W_{\le3}^{(n)}\) depends only on the corresponding
quadratic and cubic trace tensors. These tensors agree for the classical and
quantum constructions above. Therefore the two families have identical
\(\W_{\le3}^{(n)}\)-data.
\end{proof}

These examples show that the lower-order failures persist even when all
pairwise low-order words are collected. A universal pairwise Bargmann
characterization of set coherence cannot stop below fourth order.

\bibliography{bargmann_set_coherence_refs}

\end{document}